\newtheorem{theorem}{Theorem}
\theoremstyle{definition}
\newtheorem{definition}{Definition}
\begin{document}

\begin{center}
{A Generalized Extensive-Form Fictitious Play Algorithm}
\vskip .2in

{ Tim P. Schulze \\ 
   Department of Mathematics\\
  University of Tennessee\\
  1403 Circle Dr.\\
  Knoxville, TN 37916}
\end{center}

\begin{abstract}
We introduce a simple extensive-form algorithm for finding equilibria
of two-player, zero-sum games. The algorithm is realization equivalent
to a generalized form of Fictitious Play. We compare its performance
to that of a similar extensive-form fictitious play algorithm and a
counter-factual regret minimization algorithm.  All three algorithms
share the same advantages over normal-form fictitious play in terms of
reducing storage requirements and computational complexity.  The new
algorithm is intuitive and  straightforward to implement, making
it an appealing option for those looking for a quick and easy game
solving tool.

\end{abstract}

\section{Introduction}
\label{Introduction}

In recent years there has been a great deal of progress in
computational methods for solving large games. Interest in the subject
stems from both practical applications where AIs, such as self-driving
vehicles, interact with each other and humans, and from a handful
recreational games, such as chess, poker and Go, that are seen as
challenging surrogates for real-world applications, while
simultaneously appealing to a large population of devoted
enthusiasts. In particular, work on the popular variant of poker known
as Texas Hold'em has seen many years of progress culminate in a number
of high-profile success stories. Poker and other card games are
especially challenging, as they are games with imperfect information
and a large number of game states.  The development of the
Counter-Factual Regret Minimization (CFR) algorithm by Zinkevich,
et. al. \cite{ZJBP} marked a significant advance in solving large
extensive-form games, eventually leading to the numerical solution of
the two-player, limit version of Texas Hold'em \cite{BBJT}. This was
followed by other successful AIs that defeated top professional poker
players in heads-up no-limit \cite{BS} and multi-player no-limit
\cite{BS2} Texas Hold'em.

It has been recognized from the earliest days of game theory that
using {\em behavior strategies} is often preferable to {\em mixed
  strategies} for analyzing large games.  Despite this, many
computational methods use mixtures, $\sigma^i(s)$, of pure strategies, $s\in S^i$, that
specify a specific  action to be taken by player $i$ at every game state
that player may encounter:
$\sum_{s \in S^i} \sigma^i(s)=1.
$ Fictitious Play (FP), for example, is one of the oldest computational
methods for solving games \cite{B,R}. In its original formulation, it
is a method for finding a Nash Equilibrium (NE) in two-player, zero-sum,
normal form games.  In this method, the average of the prior play is iteratively updated to 
\begin{equation}\label{FP}
\sigma_{n+1}^i=\left(1-\frac{1}{n+1}\right) \sigma^i_n + \frac{1}{n+1} \beta^i(\sigma_n^{-i}),
\end{equation}
where $-i$ indicates player $i$'s opponent, $\beta^i(\sigma^{-i}_n)$ is a best
response to the opponent's play on the previous time-step:
\begin{equation}
\beta^i(\sigma^{-i}_n) \in \arg\max_{\sigma^i} u^i(\sigma^i,\sigma^{-i}_n),
\end{equation}
and $u^i(\sigma^i,\sigma^{-i})$ is expected utility, accounting for the mixed strategies of both players and the role of a chance.

While the normal form of a game is often preferable for analyzing games in
general, it balloons the computational cost and amount of storage
required for games.  In contrast, behavior strategies provide a more
compact way of representing a strategy by assigning a distribution,
$b(I,a)\ge 0$, over the actions $a \in A(I)$ available at each
information set $I\in {\cal I}$:
$$
\sum_{a\in A(I)} b(I,a) =1,
$$
including those controlled by a chance player, who plays a fixed strategy.
When players have perfect recall---they do not forget any
information they knew in the past---there exists a strategy of
either type equivalent to a given strategy of the other type \cite{K,MSZ}.

The use of behavior strategies was one of the features that allowed
the CFR algorithm and its derivatives, like CFR+ \cite{T}, to achieve
the successes mentioned above.  A key advantage of the behavior
strategy description lies in being able to consider the expected
utility, $U(I,a;b)$, of a specific action, $a\in A(I)$, at a given
information set $I$, where the utility is measured from the perspective of the player who controls $I$, and  $b$ represents the
collection of behavior strategies for all players at all information
sets.  In what follows, we refer to $U(I,a;b)$ as the {\em
  action-utility}, and suppress $b$ when the strategies are clear from
context, writing $U(I,a) \equiv U(I,a;b)$. Similarly, we will use
$U(I,{\bf b}) \equiv U(I,{\bf b};b)$ to indicate the utility of
playing a specific mixture of actions ${\bf b}$ at $I$ that may or may
not be consistent with that dictated by $b$.

More recently Heinrich. et. al. \cite{HLS} develop a version of FP,
which they call Extensive-Form Fictitious Play (XFP), that also takes
advantage of the efficiency of behavior strategies.  In this paper we
introduce a similar algorithm that we refer to as Generalized
Extensive-Form Fictitious Play (GXFP).  In close analogy to FP, GXFP
consists of a sequence of behavior strategies:
\begin{equation}\label{GXFP}
  b_{n+1}=\left(1-\frac{1}{n+1}\right) b_n + \frac{1}{n+1} d(b_n),
\end{equation}
where $d(b_n)$ is the collection of {\em best decisions} that are locally optimized with respect to both the opponent's current strategy
 and a player's own current strategy following actions $a \in A(I)$:
\begin{equation} \label{BD}
  d(I,*;b_n) \in \arg \max_{\bf b} U(I,{\bf b};b_n).
\end{equation}  
We will see that GXFP enjoys the same advantages as CFR and XFP in
terms of how computational cost and storage scale with the size of the
game, but with a simpler and more intuitive implementation. In
practice (\ref{BD}) can be be computed by simply selecting the best
action, and this mimics the way humans think. In particular, expected
value computations for what we have called action utilities are
routinely discussed in the recreational poker literature, but to the
extent humans can really make these calculations they focus on their immediate decision using
their own current strategy and their beliefs about how their opponents
play.

In the next section, we briefly review the CFR and XFP algorithms, and
further introduce GXFP.  In section 3, we show that GXFP is equivalent to a
generalized FP, and therefore inherits its convergence properties. In
section 4, we discuss a benchmark game that generalizes  a
classic model of poker put forward by von Neumann and Morgenstern (vN\&M)
\cite{NM}. In section 5, we use this benchmark game to compare the
performance of the three algorithms.  We summarize and conclude in the
final section.

\section{Algorithms}

We start with a description of elements common to all three algorithms
considered in this paper, and follow this with a discussion of each
algorithm separately.

In games with imperfect information, a player may not know which node
he/she is at, and must analyze their decisions based on the
probability that their opponent's prior play has brought them to a
particular information set. In comparing the expected value of
actions, a player need not consider the probability that their own
prior actions will bring them to that information set. Thus the play
of a player in any such calculation is assumed to have been
consistent with the need to make the decision. For this reason,
$U(I,a)$ is referred to as {\em counter-factual utility} in
\cite{ZJBP}.  This ``play-to-reach'' assumption is common to all of
the algorithms we consider in this paper.

The action-utility can be computed using a basic utility function,  ${\cal U}^i(\ell)$, defined on the set of leaves in the game tree, the  conditional probability of $i$'s opponent's, including the chance player, playing so as to reach a node $x \in I$ controlled by $i$, and the conditional probability of all players playing so as to reach leaf $\ell$ starting from $I$ with action $a$:
\begin{eqnarray}
  U(I,a)  &=&\sum_{x \in I} P^{-i}(x|I) \sum_{\ell \in L_{x,a}} P(\ell|x,a) {\cal U}^i(\ell), \\
&=&\sum_{x \in I} \frac{P^{-i}(x)}{P^{-i}(I)} \sum_{\ell \in L_{x,a}} P(\ell|x,a) {\cal U}^i(\ell), \quad P^{-i}(I)>0,
\end{eqnarray}
where the second sum on each line is over 
leaves, $L_{x,a}$, that can be reached using action $a$ at node
$x$.

The various ``reach'' probabilities can be computed from the behavior strategies $b(I,a)$ and the unique sequence of actions starting at the root
 of the game tree and terminating at a node $x$: 
 $a^x_1, a^x_2,\dots,a^x_{J_x}$, where there are $J_x$ actions along the path leading to $x$: 
\begin{eqnarray}
  P^{-i}(x)&=& \prod_{j=1,I(a_j^x)\notin {\cal I}^i}^{J_x} b(I(a_j^x),a_j^x),\\
  P^{-i}(I) &=&  \sum_{x \in I} \prod_{j=1,I(a_j^x)\notin {\cal I}^i
}^{J_x} b(I(a_j^x),a_j^x),\\
  P(\ell|x,a) &=& \prod_{j=J_x+2}^{J_\ell} b(I(a_j^{\ell}),a_j^{\ell}).
\end{eqnarray} 
Note that the behavior coefficient for $a$ is omitted in the last product, as the probability  is conditioned on that choice.

In the rest of this section, we describe
 the three algorithms considered in this paper.

\subsection{Counter-factual Regret Minimization}

The basic version of CFR put forward in \cite{ZJBP} is now sometimes
referred to as ``vanilla'' CFR, and is a popular entry point for those
 getting started with reinforcement learning (RL).  While the authors
go beyond this version, adapting it to specific features of Texas
Hold'em, and there have been subsequent developments, most notably CFR+
\cite{T}, we will be considering only this basic version.

CFR is based on the notion of {\em regret} for having played
the game according to the current strategy $b(I,*)$ rather than 
taking a specific action $a$ at information set $I$:
$$
U(I,a) -U(I,b(I,*)),
$$
where
$$
U(I,b(I,*))=\sum_{a \in A(I)} b(I,a) U(I,a).
$$
More specifically, CFR maintains the average regret, weighted
by the opponent's reach probabilities $P^{-i}(I;b_n)$:
\begin{equation}
R_n(I,a)=\frac{1}{n} \sum_{k=1}^n P^{-i}(I;b_k)
\left(U(I,a;b_k) -U(I,b_k(I,*);b_k)
\right).
\end{equation}
Notice that the opponent's reach probability appears as the
normalization factor in the computation of the action-utilities,
canceling the weighting factor and eliminating the need to compute
 these quantities unless one actually wishes to compute the
utility.  At the same time, this removes the possibility of division by zero should $P^{-i}(I)=0$. The strategy at the next iteration is proportional to the
amount of positive  regret
\begin{equation}
  b_{n+1}(I,a) = \left\{
  \begin{array}{ll}
    \frac{\max(R_n(I,a),0)}{\sum\limits_{a\in A(I)} \max(R_n(I,a),0)}, & \mbox{if}
    \sum\limits_{a\in A(I)} \max(R_n(I,a),0) > 0, \\
    \frac{1}{|A(I)|}, & \mbox{otherwise}.
  \end{array}
  \right.
\end{equation}
Finally, it is the  average of the sequence of strategies $b_n(I,a)$, weighted
by the reach probability of the player who controls $I$,
that converges to a NE:
\begin{eqnarray}
  \bar{b}_n(I,a) &=& \frac{\sum_{k=1}^n P^i(I;b_k) b_k(I,a)}{\sum_{k=1}^n P^i(I;b_k)}, \\
    P^{i}(I)&=& \prod_{j=1,I(a_j^x)\in {\cal I}^i}^{J_x} b(I(a_j^x),a_j^x), \, \forall x\in I.
\end{eqnarray}

\subsection{Extensive-Form Fictitious Play}
The key advantage of CFR  over (normal form) FP is the
ability to focus on one information set at a time. As demonstrated by
Heinrich et. al. \cite{HLS}, FP can be reformulated so that it too has this
feature. They do this by calculating a sequence of behavior strategies
that is realization equivalent to the sequence of normal form
strategies
\begin{equation}\label{GFP}
\sigma^i_{n+1}=(1-\alpha_{n+1}) \sigma^i_n + \alpha_{n+1} \beta^i(\sigma^{-i}_n),
\end{equation}
where $\beta^i$ is a  best response to the opponent's current strategy $\sigma^{-i}_n$.
This is a generalized FP with
weights $\alpha_n$ that decay to zero with a diverging sum $\sum
\alpha_n=\infty$, and reduces to the classic FP algorithm when
$\alpha_n=\frac{1}{n}$.  Their convergence proof relies on a result of
Leslie and Collins \cite{LC}, who define the following class of  generalized
  fictitious play algorithms, and then proceed to show that any such
algorithm converges to a NE of a zero-sum game.

\begin{definition} A {\em generalized weakened fictitious play process} is any process
$\{\sigma_n\}_{n\ge0}$, with $\sigma_n\in \Sigma$, such that
$$
\sigma_{n+1} \in \{(1-\alpha_{n+1})\sigma_n+\alpha_{n+1}(\beta_{\epsilon_n}(\sigma_n)+
M_{n+1})\}_{\beta_{\epsilon_n}},
$$
where $\beta_{\epsilon_n}=(\beta^1_{\epsilon_n},\beta^2_{\epsilon_n})$ is in the set of  $\epsilon_n$-best response vectors, $\alpha_n \rightarrow 0$, $\epsilon_n \rightarrow 0$ as $n \rightarrow \infty$,
$$
\sum_{n\ge 1} \alpha_n = \infty,
$$
and $\{M_n\}_{n\ge1}$ is a sequence of perturbations such that, for any $T>0$,
$$
\lim_{n \rightarrow \infty}
\sup_k \{ ||\sum_{j=n}^{k-1} \alpha_{j+1} M_{j+1}||
 :
\sum_{j=1}^{k-1} \alpha_{j+1} \le T \}=0.
$$
\end{definition}

 Note that in (\ref{GFP}), $\epsilon=0$ and $\beta^i$ is a best
 response.  Neither XFP nor GXFP make use of $\epsilon$-best
 responses, and we will later use the $\epsilon$ subscript to indicate
 a strategy in a perturbed game where actions must be taken with
 finite probability.

The L\&C \cite{LC} theorem relies on a result of Bena\"{i}m et. al. \cite{BHS}, which we will also need in the following section. We present the theorem in the form given by L\&C.

\begin{theorem}[Bena\"{i}m, et. al.]
  Assume $F: \mathbb{R}^m \rightarrow \mathbb{R}^m$ is a closed set-valued map
  such that $F(\sigma)$ is a non-empty compact convex subset of $\mathbb{R}^m$ with
    $$
\sup{||z|| : z \in F(z)} \le c(1+||\sigma||) \quad \forall \sigma.
$$
Let $\{\sigma_n\}_{n\ge0}$ be the process satisfying 
$$
\sigma_{n+1} -\sigma_n -\alpha_{n+1} M_{n+1} \in \alpha_{n+1} F(\sigma_n),
$$
with $\alpha_n \rightarrow 0$ as $n \rightarrow \infty$,
$$
\sum_{n\ge 1} \alpha_n = \infty,
$$
and $\{M_n\}_{n\ge1}$ be a sequence of perturbations such that, for any $T>0$,
$$
\lim_{n \rightarrow \infty}
\sup_k \{ ||\sum_{j=n}^{k-1}
\alpha_{j+1} M_{j+1}||
 :
\sum_{j=1}^{k-1} \alpha_{j+1} \le T \}=0.
$$
The set of limit points of $\{\sigma_n\}$ is a connected internally chain-recurrent set of the differential inclusion
\begin{equation} \label{BRDI}
\frac{d}{dt} \sigma_n \in F(\sigma_n). 
\end{equation}

\end{theorem}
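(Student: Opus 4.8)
The plan is to run the classical ``ODE method'' of stochastic approximation, adapted to the set-valued setting. First I would observe that the standing hypotheses on $F$---nonempty, compact and convex values together with the linear growth bound---are precisely the \emph{Marchaud conditions}. Under these conditions the differential inclusion (\ref{BRDI}) is well posed: a solution exists through every initial point, every solution extends to all of $[0,\infty)$ because linear growth rules out finite-time blow-up, and the solution set depends upper-semicontinuously on the initial datum. This well-posedness is what will later let me compare the discrete iterates with genuine trajectories of the flow.

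Next I would pass to continuous time. Setting $\tau_0=0$ and $\tau_n=\sum_{k=1}^n \alpha_k$, the divergence $\sum \alpha_n=\infty$ forces $\tau_n \to \infty$, and I define the piecewise-linear interpolation $\bar\sigma$ with $\bar\sigma(\tau_n)=\sigma_n$. Writing the recursion as $\sigma_{n+1}-\sigma_n=\alpha_{n+1}(y_n+M_{n+1})$ with $y_n \in F(\sigma_n)$, the interpolation satisfies, on each interval $(\tau_n,\tau_{n+1})$,
\begin{equation}
\frac{d}{dt}\bar\sigma(t)=y_n+M_{n+1}, \qquad y_n \in F(\sigma_n),
\end{equation}
so that $\bar\sigma$ is a solution of $\dot x \in F(x)$ perturbed by the noise $M_{n+1}$ and by the discretization error between $F(\sigma_n)$ and $F(\bar\sigma(t))$.

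The central step is to show that $\bar\sigma$ is an \emph{asymptotic pseudo-trajectory} of (\ref{BRDI}): for every $T>0$,
\begin{equation}
\lim_{t\to\infty}\; \inf_{x}\; \sup_{0\le s\le T} ||\bar\sigma(t+s)-x(s)|| =0,
\end{equation}
where the infimum runs over solutions $x$ of (\ref{BRDI}) with $x(0)=\bar\sigma(t)$. This is where the hypothesis on $\{M_n\}$ does its work: the requirement that $\sup_k\{||\sum \alpha_{j+1}M_{j+1}||\}$, taken over windows of bounded cumulative length, tend to zero says exactly that the accumulated noise over any fixed horizon $[0,T]$ becomes negligible as the starting index grows. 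Combining this with the linear growth bound---used to keep the trajectory inside a fixed compact set over the horizon---and a Filippov--Gronwall estimate for differential inclusions, I would bound the distance from $\bar\sigma$ to a nearby exact solution by a quantity vanishing as $t\to\infty$. I expect this Gronwall-type comparison to be the main technical obstacle: with a set-valued right-hand side one cannot simply subtract two ODEs, and must instead invoke upper semicontinuity of $F$ together with a measurable-selection argument to produce the comparison trajectory $x$.

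Finally I would invoke the purely dynamical-systems half of the argument. Assuming, as holds in the fictitious-play application where the $\sigma_n$ live in a compact simplex, that the iterates stay bounded, the limit set $L(\bar\sigma)=\bigcap_{t\ge0}\overline{\{\bar\sigma(s):s\ge t\}}$ is nonempty, compact, connected and invariant under the flow of $F$, and coincides with the set of limit points of the discrete sequence $\{\sigma_n\}$. The asymptotic pseudo-trajectory property then upgrades this to internal chain recurrence: any point of $L(\bar\sigma)$ can be joined to itself by arbitrarily fine $(\epsilon,T)$-chains whose links are manufactured from the pseudo-trajectory approximation and, because the interpolated path is continuous and asymptotically confined to $L(\bar\sigma)$, remain inside $L(\bar\sigma)$. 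This yields exactly the stated conclusion that the limit points form a connected, internally chain-recurrent set of (\ref{BRDI}).
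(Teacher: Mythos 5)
The paper offers no proof of this theorem: it is imported verbatim (in the Leslie--Collins formulation) from Bena\"{i}m, Hofbauer and Sorin \cite{BHS}, so the only meaningful comparison is with that source rather than with anything in the paper. Your outline---Marchaud conditions giving well-posedness of the inclusion, piecewise-linear interpolation on the timescale $\tau_n=\sum_{k\le n}\alpha_k$, the asymptotic pseudo-trajectory property extracted from the Kushner--Clark-type noise condition, and the limit-set theorem converting bounded limit sets of pseudo-trajectories into connected internally chain-recurrent sets---is exactly the strategy of that cited proof, including your correct observation that precompactness of the iterates (implicit here because strategies live in a compact simplex) must be supplied as a working hypothesis.
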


L\&C first show that any GFP (\ref{GFP}) satisfies the requirements of
this theorem, and then show that the set of limit points is the set of
NE.

\begin{theorem}[Leslie and Collins] Any generalized weakened fictitious play process will converge to the set of NE in two-player zero-sum games, potential games, and generic 2 $\times $ 2 games.
\end{theorem}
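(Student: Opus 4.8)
The plan is to use the Bena\"{i}m et.\ al.\ theorem to pass from the discrete generalized weakened fictitious play recursion to the continuous \emph{best-response differential inclusion}, and then to identify the internally chain-recurrent sets of that inclusion with the set of Nash equilibria in each of the three classes of games.

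First I would choose the set-valued map
$$
F(\sigma) = \beta(\sigma) - \sigma = \{ \beta - \sigma : \beta \in \beta(\sigma) \},
$$
where $\beta(\sigma) = (\beta^1(\sigma),\beta^2(\sigma))$ is the exact best-response correspondence. Because the strategy space $\Sigma$ is a product of probability simplices, it is compact and convex; standard arguments then give that $\beta(\sigma)$, and hence $F(\sigma)$, is non-empty, compact, and convex, that $F$ has closed graph (upper semicontinuity of the best response), and that the linear growth bound holds trivially since $\Sigma$ is bounded. Rewriting the defining recursion of a generalized weakened fictitious play process in the form
$$
\sigma_{n+1} - \sigma_n - \alpha_{n+1} \widetilde{M}_{n+1} \in \alpha_{n+1} F(\sigma_n),
$$
I would absorb both the given perturbation $M_{n+1}$ and the discrepancy between the $\epsilon_n$-best response $\beta_{\epsilon_n}(\sigma_n)$ and an exact best response into a single effective perturbation $\widetilde{M}_{n+1}$. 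The key point is that $\epsilon_n \to 0$ together with the upper semicontinuity of the best-response map forces this extra slack to be asymptotically negligible, so that $\widetilde{M}_{n+1}$ still satisfies the noise condition of the theorem. Applying the Bena\"{i}m et.\ al.\ theorem then yields that the set of limit points of $\{\sigma_n\}$ is an internally chain-recurrent set of the best-response differential inclusion (\ref{BRDI}).

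The remaining and central task is to show that, in each game class, every internally chain-recurrent set of (\ref{BRDI}) consists only of Nash equilibria. For this I would construct a Lyapunov function and invoke the principle that a chain-recurrent set on which a strict Lyapunov function cannot strictly decrease must lie in the set of rest points, provided the image of that set has empty interior. In the zero-sum case the natural candidate is the duality gap
$$
w(\sigma) = \max_{\tilde{\sigma}^1} u^1(\tilde{\sigma}^1,\sigma^2) - \min_{\tilde{\sigma}^2} u^1(\sigma^1,\tilde{\sigma}^2),
$$
which is non-negative, vanishes exactly on the set of Nash equilibria by the minimax theorem, and can be shown to satisfy $\frac{d}{dt} w(\sigma(t)) \le -w(\sigma(t))$ along every solution of the inclusion; since its value on the equilibrium set is the single number $0$, the chain-recurrence argument confines all limit points there. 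For potential games I would run the identical argument with the potential function $\Phi$ in place of $w$, using that best responses never decrease $\Phi$ and strictly increase it away from equilibria.

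The main obstacle will be twofold. Verifying the Lyapunov inequality for $w$ is delicate because $w$ is only piecewise smooth, being built from pointwise maxima, so the differentiation along trajectories of the inclusion must be justified carefully rather than through a naive chain rule. Separately, the generic $2 \times 2$ case does not admit an obvious global Lyapunov function, and I expect to treat it by a direct phase-plane analysis: the strategy space collapses to the unit square, the dynamics is piecewise affine, and genericity eliminates degenerate configurations, leaving the Nash equilibria as the only candidate chain-recurrent sets. Tying these three analyses together through the common chain-recurrence conclusion of the Bena\"{i}m et.\ al.\ theorem is what completes the proof.
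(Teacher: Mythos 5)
Your overall architecture is the same as the one this paper attributes to Leslie and Collins (and, for the zero-sum case, fleshes out in its own Theorem 5): pass from the discrete recursion to the best-response differential inclusion (\ref{BRDI}) via the Bena\"{i}m et al.\ limit-set theorem, then identify the internally chain-recurrent sets with the set of NE by a Lyapunov argument --- the duality gap (exactly the Hofbauer--Sorin exploitability function used in the paper's Theorem 5) for zero-sum games, the potential for potential games, and a direct analysis for generic $2\times2$ games. The route is right, but one concrete step fails as written.

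The failing step is absorbing the $\epsilon_n$-best-response slack into the additive perturbation $\widetilde{M}_{n+1}$. For the noise condition of Theorem 1 to hold, you need the per-step discrepancy $\mathrm{dist}\bigl(x_n,\beta(\sigma_n)\bigr)$ --- where $x_n \in \beta_{\epsilon_n}(\sigma_n)$ is the response actually used and $\beta(\sigma_n)$ is the \emph{exact} best-response set at the same point $\sigma_n$ --- to vanish as $n\to\infty$; otherwise the weighted sums $\sum_j \alpha_{j+1}\widetilde{M}_{j+1}$ over windows of $\alpha$-length $T$ remain of order $T$ and the condition is violated. Upper semicontinuity does not give this: it yields closeness of $x_n$ to the \emph{graph} of $\beta$, i.e., to $\beta(\sigma')$ for some $\sigma'$ near $\sigma_n$, not to $\beta(\sigma_n)$ itself, and near indifference points these differ drastically. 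Concretely, in matching pennies take $\sigma_n^1=(\tfrac12+\eta_n,\tfrac12-\eta_n)$ with $\eta_n \ll \epsilon_n$: every strategy of player 2 is then an $\epsilon_n$-best response, while the exact best response is a single pure strategy, so the discrepancy is of order one even though $\epsilon_n\to0$. The repair is the one Leslie and Collins actually use: invoke the Bena\"{i}m--Hofbauer--Sorin theory for \emph{perturbed solutions} of the inclusion, which allows the process to track a vanishing inflation of $F$ (a $\delta_n$-neighborhood of its graph with $\delta_n\to0$) rather than $F$ itself plus additive noise; a compactness argument shows $\epsilon_n$-best responses lie in such inflations. With that substitution your discrete-to-continuous reduction is sound, and your Lyapunov/chain-recurrence part (including the care you flag about nonsmoothness of the duality gap and the separate $2\times2$ analysis) can proceed unchanged.
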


Finally, Heinrich et. al. show that the mapping from the normal form strategies (\ref{GFP}) to behavior
form strategies requires
\begin{equation}\label{XFP}
b_{n+1}(I,a)=b_n(I,a) +
\frac{\alpha_{n+1}P^i(I;B^i_{n+1}) (B^i(I,a;b^{-i}_n)-b_n(I,a))}
{(1-\alpha_{n+1})P^i(I;b_n^i)+\alpha_{n+1} P^i(I;B^i)},
\end{equation}
where $B^i$ is a best response behavior strategy to $b_n^{-i}$:
\begin{equation} \label{BR}
B_{n+1}^i \in \arg \max_{b^i} u^i(b^i,b_n^{-i}),
\end{equation}
and $P^i$ is player $i$'s reach probability for either the current
strategy or the current best response.

When implemented using only (\ref{XFP}-\ref{BR}), we found XFP failed
to maintain normalized behavior strategies due to the accumulation of
round-off error.  In some cases this error became so significant that
the total exploitability could not be reliably calculated, sometimes
giving a negative result.  For this reason, we implemented XFP with a
renormalization step at each iteration:
\begin{equation}
\hat{b}_{n+1}(I,a)=b_{n+1}(I,a)/\sum_a b_{n+1}(I,a),
\end{equation}
using $\hat{b}_n$ in place of $b_n$ in (\ref{XFP}).

\subsection{Generalized Extensive Form Fictitious Play}

While GXFP (\ref{GXFP}-\ref{BD}) can be implemented with best responses
(\ref{BR}) rather than best decisions (\ref{BD}), we found it to be
significantly more accurate with the latter, in that the residual
exploitability due to round-off error was much smaller. We found the
opposite to be true
for XFP---it converged with either type of what we will call {\em better responses}, but was significantly more accurate with best responses.  The
convergence proof for GXFP is essentially the same in both cases, with
either choice requiring an expansion of Definition 1 to include a
 weakened notion of better response that will be described in the
 next section.

In practice (\ref{BD})  can be computed by simply choosing any optimal action:
$$
d(I,a;b_n) = \delta_{a\tilde{a}},\mbox{ where } \tilde{a} \in \arg \max_{a \in I} U(I,a;b_n),
$$ where $\delta_{a\tilde{a}}$ is the Kronecker delta. This is a much
more straigt-forward procedure than computing a best response $B^i$, which
requires navigating the game tree from the leaves up toward the root.

Finally, GFXP can be implemented with a more general weight
$\alpha_n$, but we chose to use the simple average of best decisions
that can be computed by {\em counting} the number of times each action
is best at a given information set, thus avoiding the accumulation of
round off error. This option, which is not available for CFR and XFP,
may be useful for extremely large games.

\section{Convergence of Generalized Extensive-Form Fictitious Play}

To prove that GXFP converges, we first expand Definition 1 to include
alternative ``better'' responses. We then show that GXFP is equivalent
to one of these expanded GFPs.  Next, we use Theorem 1 to adapt
Theorem 2 to establish convergence. Finally we adapt a theorem due to
Hofbauer \& Sorin \cite{HS} to prove  that the attractive set is the set of NE.

We will need the mapping from  behavior strategies to
   mixed strategies:
\begin{equation}\label{map}
  \sigma^i(s;b)=\prod_{I \in {\cal I}^i} b(I,a(s,I)),
\end{equation}
where where $a(s,I)$ is the action $a\in A(I)$ consistent with the pure strategy $s\in S^i$.
The  better response  required to show that (\ref{GXFP}) converges  takes the form
\begin{equation}
\tilde{d}^i(s;b)=\frac{1}{|{\cal I}^i|}\sum_{\bar{I} \in {\cal I}^i}  d(\bar{I},a(s,\bar{I});b) \prod_{I \neq \bar{I},I\in {\cal I}^i} b(I,a(s,I)),
  \end{equation}
where $d$ is the best decision (\ref{BD}) introduced earlier. Note that $\tilde{d}^i$ is a mixed strategy, whereas $d$ is a behavior strategy, and that 
 these  depend on the strategy of both opponents.
In view of (\ref{map}), we  will use  $\tilde{d}^i(s;b)\equiv
\tilde{d}^i(\sigma)$ and $\tilde{d}(\sigma)$ to indicate the  better response vector,  as we have done with best responses.
The proofs given below hold  with
best responses (\ref{BR}) replacing best decisions to define  $\tilde{B}^i(s;b_n)$, but, as noted earlier, we found this to be computationally less accurate.

\begin{theorem} GXFP is realization equivalent to a generalized weakened FP with best responses $\beta_\epsilon(\sigma)$ replaced by weakened better decisions $\tilde{d}$.
\end{theorem}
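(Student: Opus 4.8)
The plan is to separate the statement into two parts. First, realization equivalence between the behavior-strategy sequence $\{b_n\}$ generated by (\ref{GXFP})--(\ref{BD}) and the mixed-strategy sequence $\sigma_n:=\sigma(\,\cdot\,;b_n)$ obtained through the map (\ref{map}) is just the standard perfect-recall correspondence \cite{K,MSZ}: because every reach probability factors over information sets exactly as in the product formulas given for $P^{-i}(x)$, $P^{-i}(I)$, and $P(\ell\,|\,x,a)$, the distribution over leaves induced by $b_n$ agrees with the one induced by $\sigma_n$ for each $n$. So it will suffice to verify that the image sequence $\{\sigma_n\}$ satisfies the recursion of a generalized weakened fictitious play once the best response $\beta_{\epsilon_n}$ is replaced by the better decision $\tilde{d}$.

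The heart of the argument is a direct computation. I would substitute the GXFP step $b_{n+1}(I,a)=(1-\alpha)b_n(I,a)+\alpha\,d(I,a;b_n)$, with $\alpha=\tfrac{1}{n+1}$, into $\sigma^i_{n+1}(s)=\prod_{I\in{\cal I}^i}b_{n+1}(I,a(s,I))$ and expand the product multilinearly in $\alpha$. The zeroth-order term reproduces $(1-\alpha)^{|{\cal I}^i|}\sigma^i_n(s)$; the first-order terms, one for each information set $\bar{I}$ at which the factor $d$ is chosen, sum to $|{\cal I}^i|\,\alpha(1-\alpha)^{|{\cal I}^i|-1}\,\tilde{d}^i(s;b_n)$ by the definition of $\tilde{d}^i$ --- here the normalizing $1/|{\cal I}^i|$ built into $\tilde{d}^i$ is precisely what makes this a genuine mixed strategy (one checks $\sum_s\tilde{d}^i(s;b_n)=1$). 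Collecting every term of order $\alpha^2$ and higher into a remainder $R^i_{n+1}$, and setting the effective step size $\tilde\alpha^i_{n+1}:=|{\cal I}^i|\,\alpha(1-\alpha)^{|{\cal I}^i|-1}$ together with $M^i_{n+1}$ defined by $\tilde\alpha^i_{n+1}M^i_{n+1}=\bigl[(1-\alpha)^{|{\cal I}^i|}-(1-\tilde\alpha^i_{n+1})\bigr]\sigma^i_n+R^i_{n+1}$, the update acquires the target form $\sigma^i_{n+1}=(1-\tilde\alpha^i_{n+1})\sigma^i_n+\tilde\alpha^i_{n+1}\bigl(\tilde{d}^i(\sigma_n)+M^i_{n+1}\bigr)$.

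Next I would check the hypotheses inherited from Definition 1 and Theorem 1. Since $\tilde\alpha^i_{n+1}=\Theta(1/n)$, it tends to zero while $\sum_n\tilde\alpha^i_{n+1}=\infty$. For the perturbation, a short Taylor estimate gives both $(1-\alpha)^{|{\cal I}^i|}-(1-\tilde\alpha^i_{n+1})=O(\alpha^2)$ and $R^i_{n+1}=O(\alpha^2)$, so that $\tilde\alpha^i_{n+1}M^i_{n+1}=O(\alpha^2)=O\!\left(1/(n+1)^2\right)$ is summable; hence $\sum_{j=n}^{k-1}\tilde\alpha^i_{j+1}M^i_{j+1}$ is dominated by a convergent tail and vanishes uniformly in $k$ as $n\to\infty$, which is exactly the perturbation condition demanded by Theorem 1. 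Taking $\epsilon_n=0$ throughout (GXFP uses exact, not $\epsilon$-approximate, decisions), this shows $\{\sigma_n\}$ is a generalized weakened fictitious play process in the sense of the widened Definition 1.

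The hard part, I expect, is the player-dependent factor $|{\cal I}^i|$ that appears already at first order: it forces $\tilde\alpha^i_{n+1}$ to differ between the two players, so the induced process does not advance with a single scalar step size as in the original Definition 1. This is exactly where the definition must be widened --- to admit both the weakened better decision $\tilde{d}$ in place of an exact best response and a block of commensurate, player-specific step sizes. Two points will need care: confirming that after removing the $(1-\alpha)^{|{\cal I}^i|}\sigma^i_n$ and $\tilde\alpha^i_{n+1}\tilde{d}^i$ contributions the surviving remainder is genuinely $O(\alpha^2)$, so that $|{\cal I}^i|$ enters only through the step size and never as a first-order piece of $M^i_{n+1}$; and confirming that the widened, unequally clocked process still meets the closed-set-valued, convex-compact, linear-growth hypotheses on $F$ in Theorem 1. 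The identification of the resulting limit set with the set of Nash equilibria I would then leave to the subsequent adaptation of Hofbauer \& Sorin \cite{HS}.
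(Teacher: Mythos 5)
Your proposal is correct and follows essentially the same route as the paper's proof: expand the product $\prod_{I\in{\cal I}^i}\left[\left(1-\alpha\right)b_n(I,a(s,I))+\alpha\, d(I,a(s,I);b_n)\right]$, recognize the first-order terms as $\tilde{d}^i$ via its built-in $1/|{\cal I}^i|$ normalization, absorb everything of order $\alpha^2$ into the perturbation $M^i_{n+1}$, and verify the step-size and perturbation conditions of Definition 1 (the paper likewise takes $\alpha_n$ proportional to $|{\cal I}^i|/n$ and notes $M_n=O(1/n)$ suffices). If anything, your bookkeeping is more careful than the paper's on one point it passes over silently: the paper writes $\alpha_n=|{\cal I}^i|/n$ without remarking that this step size is player-dependent when $|{\cal I}^1|\neq|{\cal I}^2|$, whereas you explicitly flag that Definition 1 must be widened to admit commensurate, player-specific step sizes (or, equivalently, that the discrepancy must be argued away), which is a legitimate gap to highlight.
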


\begin{proof}

  Inserting (\ref{BD}) into the mapping from  behavior strategies to
   mixed strategies gives
\begin{eqnarray}
  \sigma_{n+1}^i(s;b_{n+1})&=&\prod_{I \in {\cal I}^i} b_{n+1}(I,a(s,I)) \nonumber \\
  &=& \prod_{I \in {\cal I}^i} \left[\left(1-\frac{1}{n+1}\right) b_n(I,a(s,I)) + \frac{1}{n+1} d(I,a(s,I);b_n)\right]. \nonumber
\end{eqnarray}

Next, we isolate terms of $O(\frac{1}{n})$ and larger  from  the product
\begin{eqnarray}
  \sigma_{n+1}^i(s;b_{n+1})&=&
  \left(1- \frac{|{\cal I}^i|}{n+1}\right)\prod_{I \in {\cal I}^i} b_n(I,a(s,I))\nonumber \\
  & & + \frac{1}{|{\cal I}^i|}\sum_{\bar{I}
    \in {\cal I}^i} \frac{|{\cal I}^i|}{n+1} d(\bar{I},a(s,\bar{I});b_n) \prod_{I \neq \bar{I},I\in {\cal I}^i} b_n(I,a(s,I)) \nonumber \\
  & & +\frac{|{\cal I}^i|}{n+1}M^i_{n+1}(s;b_n),\label{this}
\end{eqnarray}
where we have removed a factor of $\frac{|{\cal I}^i|}{n+1}$ from the
finite number of higher order terms and grouped what remains into the perturbation
$M^i_{n+1}$. Letting
$$
\alpha_n = \frac{|{\cal I}^i|}{n},
$$
and rearranging (\ref{this}) gives
$$
\sigma_{n+1}^i \in \{(1-\alpha_{n+1})\sigma_n^i+\alpha_{n+1}(\tilde{d}^i+
M^i_{n+1})\}_{\tilde{d^i}}.
$$
The weights $\alpha_n$ are proportional to those in standard FP and satisfy the requirements in Definition 1, while the perturbations $M_n=(M^1_n,M^2_n)=O\left(\frac{1}{n}\right)$ decay sufficiently fast to ensure the requirement
$$
\lim_{n \rightarrow \infty}
\sup_k \{ ||\sum_{j=n}^{k-1}
\alpha_{j+1} M_{j+1}||
 :
\sum_{j=1}^{k-1} \alpha_{j+1} \le T \}=0. 
$$

\end{proof}

In the L\&C result \cite{LC}, the relevant differential inclusion is
(\ref{BRDI}).
In view of Theorem 3, we must consider instead the weakened better
response defined above.

\begin{theorem}
  The set of limit points of a generalized weakened fictitious play process with best responses $\beta_\epsilon(\sigma)$ replaced by weakened better decisions $\tilde{d}$
   is a connected internally chain-recurrent set of the differential inclusion
\begin{equation}\label{DI}
\frac{d}{dt} \sigma_n \in \{ \tilde{d}(\sigma_n) - \sigma_n\}_{\tilde{d}} \equiv F(\sigma_n).
\end{equation}
\end{theorem}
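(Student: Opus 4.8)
The plan is to apply the Bena\"{i}m et al. result (Theorem 1) directly to the process produced by Theorem 3, taking the set-valued map to be $F(\sigma)=\{\tilde{d}(\sigma)-\sigma\}_{\tilde{d}}$. First I would rewrite the conclusion of Theorem 3 in the form Theorem 1 requires: since there is a selection $\tilde{d}(\sigma_n)$ with $\sigma_{n+1}=(1-\alpha_{n+1})\sigma_n+\alpha_{n+1}(\tilde{d}(\sigma_n)+M_{n+1})$, subtracting $\sigma_n$ and the perturbation term yields
$$
\sigma_{n+1}-\sigma_n-\alpha_{n+1}M_{n+1}=\alpha_{n+1}\left(\tilde{d}(\sigma_n)-\sigma_n\right)\in\alpha_{n+1}F(\sigma_n).
$$
The conditions $\alpha_n\to0$, $\sum\alpha_n=\infty$, and the perturbation-decay limit were already verified in the proof of Theorem 3, so the only remaining work is to check that $F$ satisfies the structural hypotheses of Theorem 1.

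The growth bound is immediate: every weakened better decision $\tilde{d}(\sigma)$ is a mixed-strategy vector lying in the compact simplex product $\Sigma$, so $||\tilde{d}(\sigma)-\sigma||\le c(1+||\sigma||)$ holds with a constant depending only on the diameter of $\Sigma$. Non-emptiness and compactness of $F(\sigma)$ follow because, for each information set $\bar{I}\in{\cal I}^i$, the best-decision set $D_{\bar{I}}(\sigma)=\arg\max_{\bf b}U(\bar{I},{\bf b};b)$ is a non-empty compact face of the simplex over $A(\bar{I})$, and $\tilde{d}$ is a continuous image of the product $\prod_{\bar{I}}D_{\bar{I}}(\sigma)$. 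Convexity is the one step that genuinely uses the form of $\tilde{d}$: holding the current behavior strategy $b$ fixed, the coefficients $\prod_{I\neq\bar{I}}b(I,a(s,I))$ are constants, so the map from the selection $(d(\bar{I},\cdot))_{\bar{I}}$ to the vector $(\tilde{d}^i(s))_{s}$ is \emph{linear}. Hence the set of achievable $\tilde{d}^i(\sigma)$ is the linear image of the convex set $\prod_{\bar{I}}D_{\bar{I}}(\sigma)$ and is therefore convex; taking the product over players and translating by $-\sigma$ preserves this.

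The main obstacle is showing that $F$ is a \emph{closed} set-valued map. Away from the boundary this is routine: upper hemicontinuity of each $D_{\bar{I}}$ follows from Berge's maximum theorem, since $U(\bar{I},{\bf b};b)$ is continuous in $b$ and linear in ${\bf b}$ over a fixed constraint set, and $F$ inherits the closed graph. The difficulty concentrates at profiles where some information set has zero opponent reach probability, $P^{-i}(\bar{I})=0$, since the action-utility is defined through the ratio $P^{-i}(x)/P^{-i}(\bar{I})$, a $0/0$ form there that can admit different limits along different approach directions and so makes the argmax jump. To handle this I would use the standard device from the differential-inclusion treatment of discontinuous best responses, replacing $F$ at these boundary points by its upper hemicontinuous, convex-valued envelope, namely the set of all limits of $\tilde{d}(\sigma')-\sigma'$ as $\sigma'\to\sigma$. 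Because the process only requires membership $\in\alpha_{n+1}F(\sigma_n)$, enlarging $F$ on this lower-dimensional set leaves the iterates satisfying the inclusion while supplying the closed graph Theorem 1 demands. With every hypothesis verified, Theorem 1 then gives that the set of limit points of $\{\sigma_n\}$ is a connected internally chain-recurrent set of $\frac{d}{dt}\sigma\in F(\sigma)$, which is exactly the assertion of the theorem.
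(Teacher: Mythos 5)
Your proposal is correct and follows the same overall route as the paper: rewrite the conclusion of Theorem 3 in the inclusion form $\sigma_{n+1}-\sigma_n-\alpha_{n+1}M_{n+1}\in\alpha_{n+1}F(\sigma_n)$ and apply Theorem 1 with $F(\sigma)=\{\tilde{d}(\sigma)-\sigma\}_{\tilde{d}}$. The difference is one of thoroughness, and the comparison is instructive. The paper's proof consists of exactly two observations — the linear growth bound holds, and the perturbation conditions were already verified in proving Theorem 3 — and never checks that $F$ is a closed set-valued map with non-empty, compact, convex values, which are the remaining hypotheses of Theorem 1. You verify all of these: non-emptiness and compactness from the structure of the best-decision sets, and convexity from the observation that, for fixed $b$, the map from the selections $\bigl(d(\bar{I},\cdot)\bigr)_{\bar{I}}$ to $\tilde{d}^i$ is linear, so the achievable set is a linear image of the convex product $\prod_{\bar{I}}D_{\bar{I}}(\sigma)$; this is precisely why the averaged form of $\tilde{d}$ is convex-valued even though sets of products of selections are not in general. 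Most importantly, you isolate the one place where closedness genuinely fails: profiles with $P^{-i}(I)=0$, where the action-utility is a $0/0$ form (the paper only defines $U$ under the proviso $P^{-i}(I)>0$) and the best-decision correspondence can jump; the paper is entirely silent on this. Your fix — replacing $F$ at such profiles by its upper hemicontinuous convex-valued envelope — is the standard regularization and keeps the iterates inside the inclusion. The one caveat you should state explicitly is that Theorem 1 then delivers chain recurrence for the \emph{enlarged} inclusion, so the theorem is, strictly speaking, proved for the regularized $F$; but since the paper's own $F$ is not even well-defined at zero-reach profiles, this reading is the only way to make the statement rigorous, and the gap here belongs to the paper, not to your argument.
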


\begin{proof}
 $F(\sigma)$
 satisfies the requirement of Theorem 1:
  $$
\sup{||z|| : z \in F(z)} \le c(1+||\sigma||) \quad \forall \sigma.
$$
The requirements on the perturbation $M_n$ in Definition 1 are the same as in Theorem 1, and were already shown to be satisfied in the proof of Theorem 3.
\end{proof}

Finally, we must show the attractive set is the set of NE:
\begin{theorem}
Any generalized weakened fictitious play process with  best responses $\beta_\epsilon(\sigma)$ replaced by weakened better decisions $\tilde{d}$  will converge to the set of NE in two-player zero-sum games.
\end{theorem}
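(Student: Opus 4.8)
The plan is to follow the same two-stage pattern that L\&C use for best responses, but with the weakened better decision $\tilde{d}$ in place of $\beta_\epsilon$. Theorem 4 already identifies the set of limit points of the process with an internally chain-recurrent set of the differential inclusion (\ref{DI}), so it remains only to show that every such chain-recurrent set is contained in the set of NE. Following Hofbauer \& Sorin \cite{HS}, I would do this by exhibiting a Lyapunov function for (\ref{DI}) whose zero set is exactly the set of NE, and then invoking the Lyapunov criterion for differential inclusions (the same Bena\"{i}m et al. \cite{BHS} framework that underlies Theorem 1), which confines internally chain-recurrent sets to the set on which the Lyapunov function is critical.

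The natural candidate is the duality gap, or total exploitability,
$$
V(\sigma) = \sum_{i} \left( \max_{\tau^i} u^i(\tau^i, \sigma^{-i}) - u^i(\sigma^i, \sigma^{-i}) \right),
$$
which in a two-player zero-sum game is nonnegative and vanishes precisely when $\sigma$ is a NE. Differentiating $V$ along a solution of (\ref{DI}) and using the envelope theorem to handle each inner maximum, the zero-sum cancellation of the cross terms reduces $\dot{V}$ to $-V$ plus a remainder driven by the increments $\tilde{d}^i - \sigma^i$. For genuine best responses this remainder vanishes and one recovers the classical estimate $\dot{V} \le -V$, giving exponential decay of the gap to zero; this is exactly the Hofbauer \& Sorin computation. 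The structural input I would supply for $\tilde{d}$ is the improvement property that, by construction of the best decision (\ref{BD}), replacing $b(\bar{I},*)$ by $d(\bar{I},*;b)$ can only raise the action-utility at $\bar{I}$, so that $u^i(\tilde{d}^i, \sigma^{-i}) \ge u^i(\sigma^i, \sigma^{-i})$, with equality only when every information set reachable under the player's own strategy is already locally optimal.

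The main obstacle is that, unlike a best response, the weakened better decision $\tilde{d}^i$ is optimized one information set at a time against the player's own current continuation. Consequently the remainder in $\dot{V}$ pairs each increment $\tilde{d}^i - \sigma^i$ with the opponent's best-response strategy rather than with $\sigma^{-i}$, its sign is not controlled pointwise, and $V$ need not be monotone instant by instant: an information set left unreached under the current strategy may carry positive local regret that the gap does not yet register. I would therefore settle the decrease at the level of chain-recurrence rather than pointwise descent. The one-shot-deviation structure of extensive-form games with perfect recall guarantees that whenever $V(\sigma) > 0$ some reachable information set admits a strictly improving decision, while repeated correction of off-path information sets eventually makes their improvements visible to their ancestors; the aim is then to show that no solution along which $V$ stays bounded away from zero can be internally chain-recurrent. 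Establishing this propagation carefully---verifying that $V$ is a Lyapunov function for (\ref{DI}) in the sense required by \cite{BHS}, with critical set equal to the NE set---is where the real work lies. Once it is in hand, the chain-recurrence criterion forces the limit set produced by Theorem 4 into the set of NE, completing the proof.
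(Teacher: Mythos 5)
Your proposal correctly identifies the central obstacle---the true best-response exploitability $V$ need not decrease along solutions of (\ref{DI}), because $\tilde{d}$ is only a locally improving response and off-path information sets can carry regret that $V$ does not register---but it then defers the resolution: the step you label ``where the real work lies'' (verifying that $V$ is a Lyapunov function for (\ref{DI}) in the sense of Bena\"{i}m et al.\ \cite{BHS}, with critical set equal to the NE set, or equivalently that no internally chain-recurrent set carries $V$ bounded away from zero) is exactly the content of the theorem and is never established. The propagation heuristic you offer (repeated correction of unreached information sets eventually becoming visible to their ancestors) is a statement about the flow over time, not a property of the function $V$, so it does not by itself supply the Lyapunov-type estimate the chain-recurrence criterion needs. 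As written, the proposal is a plan with a gap at its only hard step.

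The paper closes that gap by a different move: it changes the Lyapunov function rather than fighting the remainder term. Instead of the best-response exploitability, it uses $\widetilde{V}(\sigma)=\sum_{i\in\{1,2\}} u^i(\tilde{d}^i(\sigma),\sigma^{-i})$, i.e.\ exploitability measured against the weakened better decisions themselves. This function is matched to the vector field of (\ref{DI})---the same object $\tilde{d}$ appears in both---so the Hofbauer--Sorin computation for (\ref{BRDI}), which the paper observes rests ultimately on convexity of the exploitability in the deviating strategy (a property $\widetilde{V}$ shares), carries over and yields $\frac{d}{dt}\tilde{v}(t)\le-\tilde{v}(t)$, hence $\tilde{v}(t)\le e^{-t}\tilde{v}(0)\to 0$. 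Note that $0\le\widetilde{V}\le V$, the lower bound following from the very improvement property of (\ref{BD}) you stated, so $\widetilde{V}$ is a legitimate (weaker) exploitability measure. Your residual worry---that vanishing of a merely local exploitability might not certify a NE---is then handled by a separate terminal argument, not by the descent estimate: $\widetilde{V}(\sigma)=0$ means no information set admits a unilateral local improvement, and working backward through the game tree (a one-shot-deviation argument under perfect recall) shows both players are playing global best responses, hence $\sigma$ is a NE. In short, the idea missing from your proposal is to measure exploitability with $\tilde{d}$ itself, which turns your uncontrolled remainder into an exact match with the dynamics, and to relegate ``local optimality everywhere implies best response'' to a single backward-induction step at the end.
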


\begin{proof}
  We must show that the attractors of the differential inclusion
  (\ref{DI}) are the set of NE. We will follow the proof of Hofbauer
  and Sorin \cite{HS}, who show this for (\ref{BRDI}). They do this by
  considering the total exploitability,
  \begin{eqnarray} \label{TE}
    v(t)&=&V(\sigma^1(t),\sigma^2(t))=\max_{\sigma^1} u^1(\sigma^1,\sigma^2) - \min_{\sigma^2} u^1(\sigma^1,\sigma^2) \nonumber \\
    &=& \sum_{i\in\{1,2\}} u^i(\beta^i(\sigma^{-i}),\sigma^{-i}) \ge 0.
    \end{eqnarray}
They show that $v(t)$  evolving under the best response differential inclusion (\ref{BRDI})
  satisfies
  $$
\frac{d}{dt} v(t) \le -v(t),
$$ implying
$$
v(t) \le e^{-t} v(0),
$$ so that $v(t)$ decays to zero. They also show how to adapt this to the discrete dynamics of a FP process. An unexploitable
strategy pair is a NE by definition.

The proof given by Hofbauer \& Sorin applies to (\ref{DI}) if we replace the best responses with the weakened better response $\tilde{d}^i$: 
$$ \tilde{v}(t)=\widetilde{V}(\sigma(t))=\sum_{i\in\{1,2\}}
u^i(\tilde{d}^i(\sigma),\sigma^{-i}), 
$$
as the result ultimately follows from the convexity of $V$, which also holds for $\widetilde{V}$.
Thus,  their arguments  translate to
  $$
\frac{d}{dt} \tilde{v}(t) \le -\tilde{v}(t),
$$
implying
$$
\tilde{v}(t) \le e^{-t} \tilde{v}(0),
$$
with $\tilde{v}(t)$ decaying to zero. When this alternative measure of exploitability $\tilde{v}(t)$ reaches zero there will be no information set at which either player can unilaterally improve. Working backward through the game tree, this implies that both players are playing a best response, hence we are at a NE.

\end{proof}

\section{Benchmark Game}

The analysis of large games is often absent from  books on
game theory, which tend to focus on extremely simple games,
e.g. rock-paper-scissors or the Prisoner's Dilemma.  A notable exception to
this occurs in {\em The theory of games and economics behavior} by von
Neumann and Morgenstern \cite{NM}, which features an in depth analysis of two
models for the game of poker. The text refers to these as the
symmetric and asymmetric games.  A footnote at the opening of this
discussion reveals that these models were largely responsible for von
Neumann's original exploration of game theory.
Poker itself is so complicated that few take on the task of using
something like Texas Hold'em as a benchmark, so reduced models are
often used for the purpose of exploring RL methods. Many of these
models can be solved exactly.

Of the two models considered by vN\&M, the asymmetric one is more
similar to the actual game of poker. Further, it is more challenging
from a computational perspective, as it possesses an infinite number
of NE.  While this is a useful starting point for benchmarking, the
game tree is not deep enough for the play-to-reach feature to matter,
as neither player has any control over which of their own informaton
sets are reached.  The game of poker, however, suggests many ways of
generalizing this game into a broad family of potential benchmark
games that are still relatively easy to implement.

In this section, we first review the results from vN\&M
 for their asymmetric game, and then introduce a
generalized version that features a somewhat deeper game tree.

\subsection{Von Neumann and Morgenstern's asymmetric game}

The solutions to large games can be bewildering from a human
perspective. A further advantage of simplified models is that  one
can gain some intuitive understanding of how more complicated games
work. Starting with vN\&M's asymmetric game will help us understand
what is happening in the benchmark game we consider in the next section.

In these two-player zero-sum games, players are `dealt' hands
(private information) that take the form of random numbers.  The
discrete version of these games, where the players are dealt random
integers, $1 \le i \le N$, is mentioned in vN\&M, but the text
principally focuses on the continuous versions of these games, with hands $x 
\in [0,1]$. The continuous versions 
 are more readily  solved exactly, and the solutions for the
asymmetric game closely track that of the discrete game for large $N$.
This will be useful for understanding our numerical solutions.

In the asymmetric game\footnote{We have adopted a more modern poker
  parlance, but the game is equivalent to the version described by von
  Neumann and Morgernstern with the ``low bid'' and ``high bid''
  options.}, each of the players {\em ante} an amount $A$, forming a {\em pot} $P=2A$, and then take
turns deciding whether or not to place an additional bet $B$.  The
players are betting on their private information---the value of a
random number `dealt' to them after placing their antes, but before
making the additional bets.  The first player to act may either {\em
  check} (i.e. bet zero) or place a {\em bet} of fixed size $B>0$. In
the vN\&M model, the second player only acts if this bet is placed,
and then has the option to {\em fold} or {\em call}. If the second
player folds, the first player receives the antes. If the fist player
checks or the second player calls, the pot is distributed according to
the highest hand.  In their text, vN\&M 
 introduce what would later
be called a behavior-strategy description of these games, where
$b^i(x,a)$ describes the fraction of time player $i$ takes action $a$
at each information set.  They find this game to have a continuum of
optimal solutions (equivalent to NE, which had not yet been
invented). The first player plays the same strategy in all of these,
having two thresholds between which they never bet, and outside of
which they always bet:
\begin{eqnarray}
  x_1&=&\frac{AB}{4A^2+5AB+B^2}, \nonumber \\
    x_2 &=& \frac{2A^2+4AB+B^2}{4A^2+5AB+B^2} \nonumber .
\end{eqnarray}
The lower region corresponds to a {\em bluff}.  In the modern
recreational poker literature, betting one's weakest and strongest
hands is referred to as betting a {\em polarized} range. Despite
its simplicity, the  model captures this significant
insight into poker strategy.  The second player has an
infinite number of choices that achieve NE.
If we let
$b^2(y,\mbox{call})$ be the fraction of time the second player calls a bet
when facing one, vN\&M show that
$$
\frac{1}{x_2-z_0} \int_{z_0}^{x_2} b^2(y,\mbox{call}) dy   \left\{
\begin{array}{ll}
  =  \frac{A}{A+B} &\mbox{if }\, z_0=x_1       \\
 \ge \frac{A}{A+B}  &\mbox{if }\, x_1<z_0<x_2      
\end{array}
\right.
$$
are
both necessary and sufficient conditions for equilibrium. Among these choices, there is a single, weakly-dominant strategy where the second player always folds/calls below/above a threshold
$$
y_1= \frac{3AB+2B^2}{4A^2+5AB+B^2} \nonumber .
$$

\begin{figure}
  \begin{center}
    \includegraphics[height=3.in]{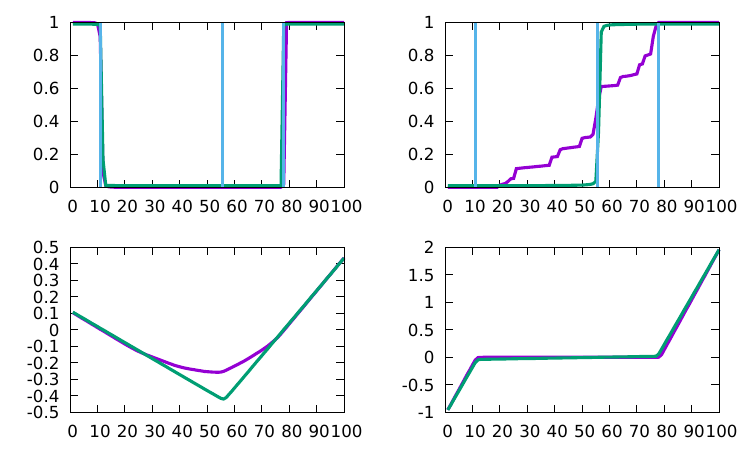}
    \end{center}
  \caption{The discrete version of vN\&M's asymmetric game with $P=1, B=1$ and
    $N=100$ hands.  The purple curves correspond to no perturbation, $\epsilon=0$; the
    green curves are the perturbed game with 
$\epsilon=0.01$.
    Top row: the fractions with which player 1 bets
    (left panel) and player 2 calls (right panel), along with the thresholds for the continuous game (blue vertical lines). Bottom row:
    the difference in the expected
value of player 1's options (left panel) and player 2's options (right
panel).}
\end{figure}

Numerical solutions using any of the algorithms described above reveal
an analogous result, where player 2's strategy endlessly drifts among
weakly dominated strategies.  For this reason, we consider both the
perturbed game, where each option at a given information set is played
with a minimum probability, $b(I,a) \ge \epsilon$, and the
non-perturbed game.  This is straightforward with GXFP, as we simply
constrain the best decision:
\begin{equation} \label{CBD}
  d_\epsilon(I,a) = \epsilon+ (1-\epsilon) \delta_{a\tilde{a}},\mbox{ where } \tilde{a} \in \arg \max_{a \in I} U(I,a;b_n).
\end{equation}
A similar calculation can be made when computing best responses with XFP, while
CFR requires a somewhat more complicated adjustement \cite{FKS}. In Figure 1, we
present results for the game with $N=100$ uniformly distributed
hands. These results are analogous to those shown vN\&M.

The graphs in the top row are the fractions with which player 1 bets
(left panel) and player 2 calls (right panel). For player 1, the
unperturbed result (shown in purple) and the perturbed result with
$\epsilon=0.01$ (shown in green) are nearly indistinguishable,
indicative of there being a unique strategy for player 1 at
equilibrium.  For comparison, the thresholds given above for the
continuous version of the game are shown as blue vertical lines. For
player 2, the perturbed result is unique and approximates the
solution with pure-strategy thresholds given above, while the unperturbed result
endlessly drifts among the set of NE that employ a weakly-dominated strategy for player 2.

In the bottom row of Figure 1, we graph the difference in the expected
value of player 1's options (left panel) and player 2's options (right
panel), with a positive difference corresponding to the betting and
calling options, respectively. From the left panel, we see that player
2's weakly dominated strategy is outperformed by the dominant one if
player 1 is forced to bet with probability $\epsilon$ in the region
where the unperturbed strategy is to check. Examining the lower-right
panel, we see that there is no significant difference in player 2's
perturbed and unperturbed payoff, as player 1 plays a nearly equal
strategy in each case. We also see that the ambivalence in player 2's
strategy is due to being indifferent between calling and folding
throughout the region where mixed equilibria exist.

\subsection{The benchmark game}

If one understands {\em fictitious play} to mean finding the average
best response to an opponent's prior play, and does not realize this
was originally only considered for the normal form of a game, it is
easy to stumble upon  algorithm (\ref{GXFP}-\ref{BD}). The situation
is more subtle, however, if the game tree is deeper. This requires the
play-to-reach assumption.  This issue is irrelevant in the case of the
game just discussed. To bring it to the forefront, we will consider a
slightly more complicated version of the asymmetric game that allows
for a {\em bet} and a single {\em raise}, including the possibility of
a {\em check-raise}. This game is briefly addressed in  Chen \&
Ankenman \cite{CA}, a book aimed primarily at recreational poker
players, but the discussion omits the details given below.

For the continuous version of the game, a NE using pure-action choices
containing 12 thresholds exists, but the strategy for player 1 is
weakly dominated by an infinite number of other stategies, including
pure-action strategies that features two additional thresholds. This
differs from what happens in the vN\&M game, where there is a unique
pure-action NE with the smallest number of thresholds and weakly
dominant strategies.  For the expanded game just described, the linear
system of equations that would determine the full set of thresholds is
singular, leading to a degeneracy of pure action equilibria. The
numerical results presented in the next section reveal that this game
also features an infinite number of mixed strategy NE, with 
 this occurring for both players.

When the pot $P=2A=1$, bet  $B=1$ and raise  $R=1$,
the eight thresholds for player 1 are $\{ x_1 = 64/1083, x_2 =
369/722, x_3 = 10/19, x_4 = x_5-32/1083, x_5 , x_6 = 307/361, x_7 =
x_8- 22/361,x_8 \}$, where $x_5$ and $x_8$ can be chosen arbitrarily
so long as all of the thresholds remain in ascending order. These
correspond to nine intervals of hands where player 1 takes a specific
sequence of actions: $\{ \mbox{bet-fold} < \mbox{check-fold} <
\mbox{check-raise} < \mbox{check-call} < \mbox{bet-fold} <
\mbox{check-call} < \mbox{bet-call} < \mbox{check-raise} <
\mbox{bet-call} \}$.  The six thresholds for the second player divide
into two sets of three: $\{y_1^1=8/57, y_2^1= 41/57, y_3^1=15/19 \}$,
corresponding to four intervals where player 2 responds to a check with
the actions
$\{\mbox{bet-fold}<\mbox{check}<\mbox{bet-fold}<\mbox{bet-call}\}$ and
$\{ y_1^2=1/2,y_2^2=10/19, y_3^2=17/19\}$ where player 2 responds to a
bet with the actions
$\{\mbox{fold}<\mbox{raise}<\mbox{fold}<\mbox{call}\}$.

\section{Numerical experiments}

As with the asymmetric game, the degeneracy of the solution means that
the numerical solution will drift among the possible equilibria. This
makes direct comparison of the strategy profiles generated by the
methods difficult. Thus in our first figure in this section we present
only a single realization generated by GXFP.  The CFR and XFP algorithms give
qualitatively similar results, as do other initial conditions.  In the
rest of the section we will be able to make direct comparisons.

\begin{figure}
  \begin{center}
    \includegraphics[height=3.in]{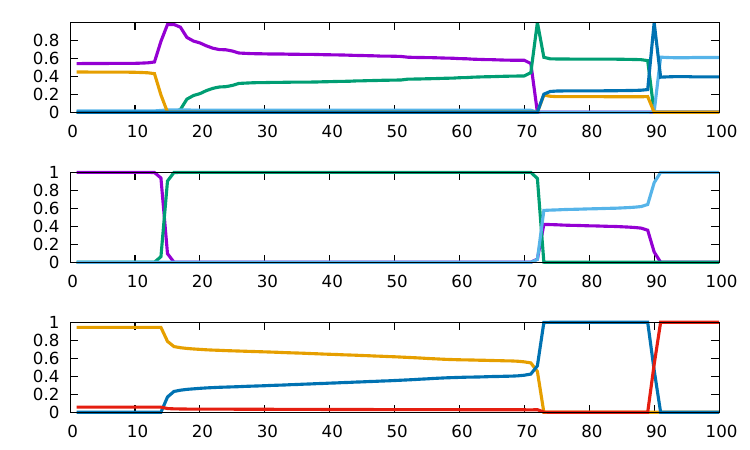}
    \end{center}
 \caption{Numerical solution of the  benchmark game with $P=1, B=1, R=1$ and $N=100$
   hands. Top panel: the probability with which player one takes the
   action sequences check-fold (purple), check-call (green), check-raise
   (light blue), bet-call (dark blue), and bet-fold (gold).  Middle
   panel: player 2's response to an initial check from player 1:
   either check (green), bet-fold (purple), or bet-call (light blue). Bottom
   panel: player 2's response to an initial bet from player 1: either
   fold (gold), call (dark blue), or raise (red).}
\end{figure}

In Figure 2 we plot the strategies returned by the GXFP algorithm as a
function of the hand strength, $1 \le i \le 100$. In the top panel,
the purple curve is the probability of checking, followed by folding
to a bet; the green curve is the probability of checking, followed by
calling a bet, and the light blue curve is the probability of
checking, followed by raising, these three quantities adding to
one. The remaining two strategy sequences, bet-call (dark blue) and bet-fold (gold)  also add to one. 
The middle panel is player 2's response to
an initial check from player 1: either another check (green), a bet
followed by a fold if raised (purple), or a bet followed by a call if
raised (light blue), these three quantities adding to one. The bottom panel
is player 2's response to an initial bet from player 1: either  fold
(gold),  call (red), or a raise (dark blue), with these three
quantities again adding to one. 
The main thing to notice here is that there is a
mixture of strategies being used for most hands, indicative of the type
of degeneracy we saw in the asymmetric game, only this time it 
occurs for both players.

\begin{figure}
  \begin{center}
    \includegraphics[height=3.in]{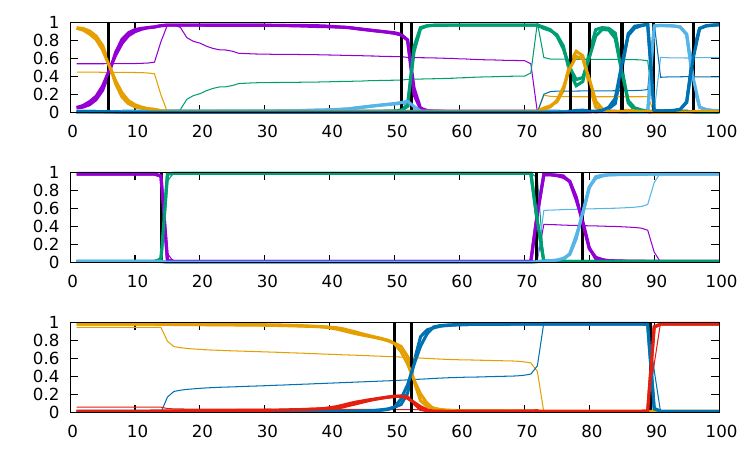}  
    \end{center}
  \caption{Numerical solution of the perturbed
    ($\epsilon=0.01$) benchmark game using XFP and GXFP (nearly identical; both using the color scheme described in the last figure) and the unperturbed ($\epsilon=0.0$) benchmark game  using GXFP (all blue, thin line) for $P=1, B=1, R=1$, and $N=100$, along with
  the thresholds (black vertical lines) for the continuous version of the game.}
\end{figure}

As with the asymmetric game, we find that we can again remove the
weakly dominated strategies by approximating the equilibria subject to
a small perturbation. 
The results for doing this with XFP and GXFP are
shown in Figure 3, along with the unperturbed solution from Figure 2
and the thresholds for the continuous version of the game. The two
arbitrary thresholds ($x_5$ and $x_8$) were roughly fit to these
graphs, but the fit of all other thresholds provides a useful way of
detecting coding errors.  The mixing near the thresholds is due to
boundary effects inherent to the discrete game, and diminishes as the
number of possible hands increases. Some of these regions are very
narrow, so the discretization affects the result more strongly.  The
results for the two algorithms are nearly identical, as expected. We
did not adapt CFR to play the perturbed game.

To make comparisons between all three algorithms, we examine plots of
the utility of the current strategy pair $u^1(b_n^1,b_n^2)$ measured from
player 1's perspective and the total exploitability $v(t) \ge 0$, defined in equation (\ref{TE}). In principal, the
former should converge to the value of the game, but in practice there
is some numerical error due to finite precision arithmetic. The total
exploitability serves as a measure of this error.

\begin{figure}
  \begin{center}
    \includegraphics[width=6 in,height=6 in]{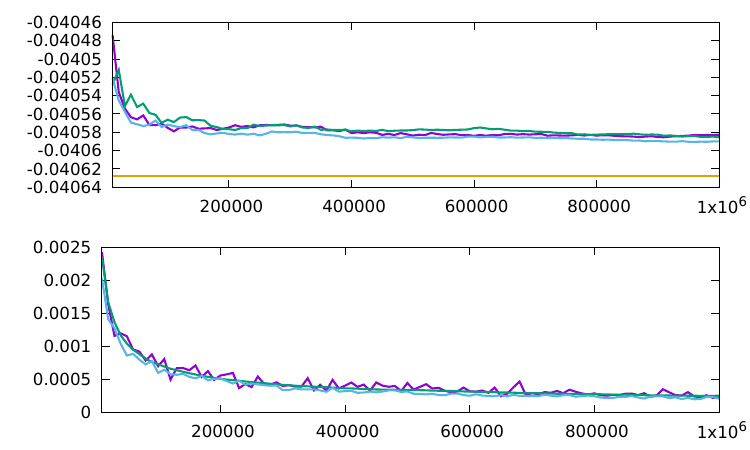} 
    \end{center}
  \caption{Numerical solution of the benchmark game with $P=1, B=1, R=1$, and $N=100$ hands. Top panel: the value of the game (from player 1's
perspective) returned by the three algorithms at intervals of
$10,000$ iterations, along with the  exact value for the continuous version of the game,  
$-\frac{44}{1083} \approx 0.406$, shown in gold. A sample for each of the 
of the  algorithms is shown for random initial data:  GXFP 
in purple, CFR in light blue and XFP in green. Bottom panel: the corresponding plots for total exploitability.}
\end{figure}

In the top panel of Figure 4 we plot the expected value of the current
strategy pair (from player 1's perspective) returned by the three
algorithms at intervals of $10,000$ iterations, along with the the
exact value of $-\frac{44}{1083} \approx 0.406$ for the continuous
version of the game, shown in gold.  A sample for each of the three
algorithms is shown for random initial data. The solutions oscillate
somewhat at the scale shown, but are all comparable in
magnitude. Results vary with initial conditions, but are qualitatively
similar. The numerical solutions will get closer to the solution for
the continuous version of the game as $N \rightarrow \infty$, but
round off error will prevent them from achieving this solution
exactly.

While getting the correct value of the game is desirable, this can
sometimes be achieved with nonoptimal strategies.  A better measure of
error is the total exploitability. In the lower panel, we plot this
quantity at every $10,000$ iterations for the same samples presented
in the top panel. The error and the rate at which it decays is similar
for all three methods.

Finally, we should note that all three
algorithms have a similar computational cost per iteration, despite
the somewhat more complicated  XFP and CFR
recursive updates. The extra calculations needed to compute regret or a best response scale with the number of information sets, $|{\cal I}^1| + |{\cal I}^2|$, but the most expensive part of these calculations, computing  the action utilities, must be done for all three algorithms, and this scale
like $|{\cal I}^1|  |{\cal I}^2|$.

\begin{figure}
  \begin{center}
    \includegraphics[width=6 in,height=6.in]{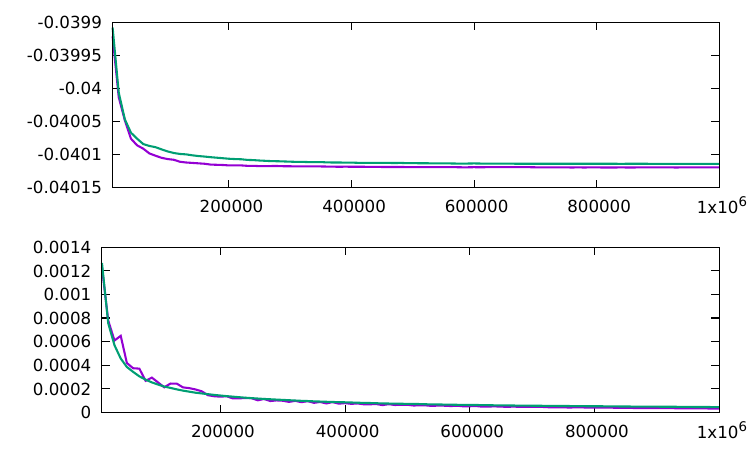} 
    \end{center}
  \caption{The perturbed ($\epsilon=0.01$) benchmark game with $P=1,
    B=1, R=1$, and $N=100$ hands. Top panel: the value of the game
    (from player 1's perspective) returned at intervals of $10,000$
    iterations.  A sample of each algorithm is shown for random
    initial data: GXFP in purple and XFP in green. Bottom panel: the
    corresponding plots for total exploitability.}

\end{figure}

In Figure 5, we compare the performance of XFP and GXFP for
generalized NE with $\epsilon=0.01$. 
In the top panel, the purple and green curves are once
again the utility of the current strategy pair from player 1's
perspective returned by the GXFP and XFP algorithms at intervals of
$10,000$ iterations. The bottom panel is the exploitability using the
same color scheme. Note that convergence for the perturbed game is
much faster than for the unperturbed one, a result of the degenerate
NE making convergence much more difficult to achieve.

\section{Summary}

In this work we have introduced a new algorithm, GXFP, that is
realization equivalent to a generalized form of Fictitious Play, thus
inheriting the convergence properties of that class of algorithms. We
then compared the computational performance of GXFP to that of two
additional algorithms, CFR and XFP, using an expanded version of
a simple poker model first introduced by von Neumann and
Morgenstern. This benchmark game, where each of the two players makes at least
one, and up to two, decisions, has a somewhat deeper game tree that
requires the play-to-reach utilities that are common to all three
algorithms.  We also presented an exact solution for the continuous
version of this game that is useful for testing the algorithms. Like
vN\&M's original game, this game features an infinite number of
NE.  As a result, a variation of the game with a perturbed
strategy space and a unique equilibrium was also considered.  This
generalized NE is computed more quickly and is easier to interpret.

The computational cost per iteration is comparable for all three
algorithms.
GXFP's simple update formula requires no best response
calculation, relying instead on a best decision calculation that is
intuitive and, at least in some approximate sense, routinely used 
to make decisions in recreational games. This makes it an ideal choice for
anyone looking for a quick and easy game solving tool.  Unlike the
other two algorithms, GXFP can be implemented without accumulating
round-off by using integer type variables to count the number of times a
strategy choice was best. This can be done with or without the
perturbed strategy space.  GXFP also has somewhat lower memory
requirements. These features may be helpful for solving especially
large games.

Finally, all three algorithms converged much more quickly when using
updates that alternate between the two players, using the opponent's
most recently updated strategy rather than the strategy from the
previous iteration. This is a well-known feature of algorithms of this
type, and is, for example, largely responsible for CFR+'s faster
convergence \cite{T,BMS}.

\vskip 0.2in
\bibliography{paper37}
\bibliographystyle{theapa}

\end{document}